\newtheorem{Def}{\textbf{Definition}}
\newtheorem{proof}{\textbf{Proof}}
\newtheorem{corol}{\textbf{Corollary}}
\newtheorem{lma}{\textbf{Lemma}}
\begin{document}

%
\title{\huge{Joint Task Offloading Scheduling and Transmit Power Allocation for Mobile-Edge Computing Systems}}
\author{\IEEEauthorblockN{Yuyi Mao{$^\dagger$}, Jun Zhang{$^\dagger$}, and Khaled B. Letaief{$^{\dagger\ast}$}, \emph{Fellow, IEEE}}
\IEEEauthorblockA{$^\dagger$Dept. of ECE, The Hong Kong University of Science and Technology, Clear Water Bay, Hong Kong\\
$^{\ast}$Hamad bin Khalifa University, Doha, Qatar\\
Email: \{ymaoac, eejzhang, eekhaled\}@ust.hk}
\thanks{This work is supported by the Hong Kong Research Grants Council under Grant No. 16200214.
}
}
\maketitle

\begin{abstract}
Mobile-edge computing (MEC) has emerged as a prominent technique to provide mobile services with high computation requirement, by migrating the computation-intensive tasks from the mobile devices to the nearby MEC servers. To reduce the execution latency and device energy consumption, in this paper, we jointly optimize task offloading scheduling and transmit power allocation for MEC systems with multiple independent tasks. A low-complexity sub-optimal algorithm is proposed to minimize the weighted sum of the execution delay and device energy consumption based on alternating minimization. Specifically, given the transmit power allocation, the optimal task offloading scheduling, i.e., to determine the order of offloading, is obtained with the help of flow shop scheduling theory. Besides, the optimal transmit power allocation with a given task offloading scheduling decision will be determined using convex optimization techniques. Simulation results show that task offloading scheduling is more critical when the available radio and computational resources in MEC systems are relatively balanced. In addition, it is shown that the proposed algorithm achieves near-optimal execution delay along with a substantial device energy saving.
\end{abstract}

\begin{keywords}
Mobile-edge computing, task offloading scheduling, power control, flow shop scheduling, convex optimization.
\end{keywords}
%
\IEEEpeerreviewmaketitle
\section{Introduction}
The rapid development of mobile applications with advanced features has brought great pressure on mobile computing systems. However, the limited processing capability of mobile devices becomes an obstacle to fulfill such a requirement. Mobile-edge computing (MEC) has emerged as a promising technique to resolve this issue, which offers computation capability within the radio access network in contrast to conventional cloud computing systems that use remote public clouds \cite{ETSI1409,YMao17MECSurvey}. By offloading the computation-intensive tasks from the mobile devices to the nearby MEC servers, the quality of computation experience, including the latency and device energy consumption, could be greatly improved \cite{Kumar1302}.

Nevertheless, the efficiency of an MEC system largely depends on the adopted computation offloading policy, which should be carefully designed by taking the characteristics of the computation tasks and wireless channels into account \cite{WZhang1309}-\cite{Khalili15}. In \cite{WZhang1309}, energy consumption for mobile execution and computation offloading for single-user MEC systems was minimized via dynamic voltage frequency scaling and data transmission scheduling, respectively. A delay-optimal task scheduling algorithm for single-user MEC systems was proposed in \cite{JLiu1607}. For multi-user MEC systems, a decentralized computation offloading policy was proposed in \cite{XChen1504}, and joint sub-carrier and CPU time allocation was investigated in \cite{YYu1612}.

Inspired by the fact that many applications can be divided into a set of dependent sub-tasks, fine-grained computation offloading (a.k.a. code partitioning) has also been widely studied most recently \cite{MJia1404}-\cite{Khalili15}. Assuming serial implementation of communication and computation, a heuristic offloading algorithm was proposed in \cite{MJia1404}, while the joint optimization of radio resource and code partitioning was investigated in \cite{Lorenzo13}. Besides, a dynamic code partitioning algorithm was proposed in \cite{DHuang1206}, which is adaptive to the time-varying wireless data rate and the random application requests. In order to reduce the execution latency, parallel implementation of communication and computation was adopted in \cite{YHKao14}-\cite{Khalili15}, where the code partitioning policy design becomes much more challenging. In \cite{YHKao14}, a code partitioning algorithm was developed based on dynamic programming for applications with a tree topology. This study was extended to applications with general topologies in \cite{Mahmoodi16}, where the solution has exponential complexity. By leveraging the structures of the application topology, a code partitioning algorithm was proposed to reduce the computation complexity based on message passing in \cite{Khalili15}. However, these studies assume that each offloaded sub-task is allocated with certain amount of communication bandwidth and computation resource at the MEC server, i.e., transmitting the input data of multiple tasks (executing multiple tasks at the MEC server) concurrently is allowed. Although such assumption makes the design more tractable, it may be impractical for MEC systems with limited resources, e.g., for MEC servers with a single communication channel and a single-core CPU.

In this paper, we consider a single-user MEC system with multiple independent computation tasks requesting for mobile-edge execution, assuming parallel implementation of task offloading and execution. A radio resource limited system is considered, where each time the input data of only one task can be offloaded, while different tasks are executed by the MEC server sequentially. In this case, the transmission and execution processes of different tasks are coupled, and the task offloading scheduling, i.e., the order of the tasks to be offloaded, becomes a new design dimension. We formulate a joint task offloading scheduling and transmit power allocation problem with the objective of minimizing the weighted sum of the execution delay and device energy consumption, which is a highly non-trivial \emph{mixed integer nonlinear programming (MINLP)} problem. A low-complexity sub-optimal algorithm is proposed based on alternating minimization. In particular, the optimal task offloading scheduling with a given transmit power allocation is obtained with the aid of \emph{flow shop scheduling theory}, while the optimal transmit power allocation with a given task offloading scheduling decision is determined based on \emph{convex optimization}. Simulation results shall show that task offloading scheduling is more critical when the radio and computational resources are relatively balanced. Besides, near-optimal execution delay along with a substantial device energy saving can be achieved by the proposed algorithm.


\section{System Model}
\vspace{-10pt}
\begin{figure}[h]
\centering
\includegraphics[width=0.5\textwidth]{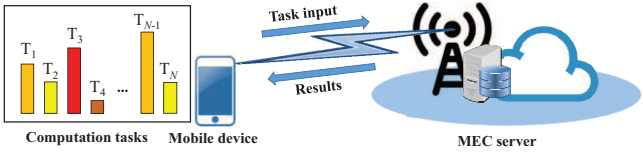}
\vspace{-20pt}
\caption{A mobile-edge computing system with a mobile device and an MEC server.}
\label{sysmodelMEC}
\end{figure}

We consider a single-user mobile-edge computing (MEC) system as shown in Fig. \ref{sysmodelMEC}, which consists of a mobile device and an MEC server. The MEC server is a small-scale data center deployed by the telecom operator and communicates with the mobile device through the wireless channel. A cloud-clone associated with the mobile device is running at the MEC server. Hence, the MEC server can execute the computation tasks on behalf of the mobile device \cite{Satyanarayanan0910}. Besides, the distance between the mobile device and the MEC server is denoted as $L$, and the system bandwidth is represented by $\omega$.

\subsection{Computation Task Model}

We assume that the mobile device has $N$ independent computation tasks that need to be executed, where the set of tasks is denoted as $\mathcal{T}\triangleq \{\text{T}_{1},\cdots,\text{T}_{N}\}$. Each computation task is characterized by a two-tuple of parameters, $\langle d_{i},c_{i}\rangle$, where $d_{i}$ (in bits) is the amount of task input data, and $c_{i}$ (in CPU cycles\slash bit) is the workload. The values of $d_{i}$ and $c_{i}$ depend on the nature of the computation tasks and can be obtained through off-line measurements \cite{Miettinen1006}. Denote $\mathbf{d}=\left[d_{1},\cdots,d_{N}\right]$ and $\mathbf{c}=\left[c_{1},\cdots,c_{N}\right]$. In this paper, we focus on the scenarios that the mobile device has very limited computational resource, and hence all the computation tasks should be offloaded to the MEC server for mobile-edge execution \cite{KWang16}. The scenarios that the mobile device has more powerful computation capability will be handled in our future work.

\subsection{Task Offloading and Mobile-Edge Execution Model}

In order to offload the computation tasks for mobile-edge execution, the task input data should be transmitted to the MEC server. We denote the task offloading scheduling decision for the $N$ tasks as $\bm{\sigma}\triangleq \left[\sigma_{1},\sigma_{2},\cdots,\sigma_{N}\right]$, which is a permutation of the task indices and $\text{T}_{\sigma_i}$ is the $i$th task being offloaded to the MEC server. Hence, $\bm{\sigma}$ satisfies $\sigma_{i}\in\{1,\cdots,N\}$ and $\sigma_{i}\neq \sigma_{j}, j\neq i,\forall i,j=1,\cdots,N$. The mobile device is equipped with a single antenna and it can transmit the input data of one computation task at each time. Thus, the transmission rate for task $\text{T}_{i}$ can be expressed as
\begin{equation}
R\left(p_{i}\right) = \omega \log_{2} \left(1+ \frac{g_{0}\left(L_{0}\slash L\right)^{\theta}p_{i}}{N_{0}\omega}\right),
\end{equation}
where $p_{i}$ is the transmit power for task $\text{T}_{i}$, $g_{0}$ is the path-loss constant, $\theta$ is the path-loss exponent, $L_{0}$ is the reference distance, and $N_{0}$ is the noise power spectral density at the receiver of the MEC server.

The MEC server has a single-core CPU and executes the offloaded tasks one after another. A task buffer is employed at the MEC server to store the tasks that have been offloaded but not yet executed, which is assumed to be sufficiently large for simplicity. Besides, it executes the computation task with a \emph{first-come-first-serve (FCFS)} fashion. Thus, the execution order of the tasks is the same as the task offloading order.  The CPU-cycle frequency at the MEC server is denoted as $f_{\rm{ser}}$ (in Hz). Besides, the computation results are assumed to be of small size and thus the feedback delay is ignored.

The execution delay and device energy consumption are two critical design considerations in MEC systems, and both of them depend on the adopted task offloading scheduling and transmit power allocation policy. In the next section, we will formulate an optimization problem to minimize the weighted sum of the execution delay and device energy consumption, by jointly designing the offloading scheduling and transmit power allocation for the computation tasks.

\section{Problem Formulation}

In this section, we will first analyze the execution delay and device energy consumption with given task offloading scheduling decision $\bm{\sigma}$ and transmit power allocation vector $\mathbf{p}=\left[p_{1},\cdots,p_{N}\right]$. The joint task offloading scheduling and transmit power allocation problem will then be formulated.

The mobile-edge execution for a computation task cannot be started until the following two conditions are satisfied: First, the task input data is ready at the MEC server. Second, the CPU at the MEC server is available for executing a new task. Denote the ready time of the task input data and the completion time for the $j$th task being offloaded (i.e., $\text{T}_{\sigma_{j}}$) as $t_{\text{ready}}^{j}\left(\bm{\sigma},\mathbf{p}\right)$ and $t_{\text{comp}}^{j}\left(\bm{\sigma},\mathbf{p}\right)$, respectively. Thus, $t_{\text{ready}}^{j}\left(\bm{\sigma},\mathbf{p}\right)$ is given by
\begin{equation}
t_{\text{ready}}^{j}\left(\bm{\sigma},\mathbf{p}\right)=\sum_{k\leq j}\frac{d_{\sigma_{k}}}{R\left(p_{\sigma_{k}}\right)},j=1,\cdots,N,
\label{readytime}
\end{equation}
which is the sum transmission time of tasks $\text{T}_{\sigma_{1}},\cdots,\text{T}_{\sigma_{j}}$.

The completion time of task $\text{T}_{\sigma_{j}}$ not only depends on the ready time of its task input data, but also couples with the completion time of the task being offloaded previously, i.e., $\text{T}_{\sigma_{j-1}}$. Consequently, $t_{\text{comp}}^{j}\left(\bm{\sigma},\mathbf{p}\right)$ can be expressed as the following recursive form:
\begin{equation}
t_{\text{comp}}^{j}\left(\bm{\sigma},\mathbf{p}\right)=\begin{cases}
t_{\text{ready}}^{j}\left(\bm{\sigma},\mathbf{p}\right) + d_{\sigma_{j}}c_{\sigma_{j}}f_{\rm{ser}}^{-1}, &j=1\\
\max\{t_{\text{ready}}^{j}\left(\bm{\sigma},\mathbf{p}\right),t_{\text{comp}}^{j-1}\left(\bm{\sigma},\mathbf{p}\right)\}\\ \ \ \ \ \ \ \ \ \ \ \ \ \ \ \ \ \ \ \ \ \ + d_{\sigma_j}c_{\sigma_j}f_{\rm{ser}}^{-1}, &j>1,
\end{cases}
\label{completionTime}
\end{equation}
where $d_{\sigma_{j}}c_{\sigma_{j}}f_{\rm{ser}}^{-1}$ is the mobile-edge execution time of task $\text{T}_{\sigma_{j}}$, and $t_{\text{comp}}^{N}\left(\bm{\sigma},\mathbf{p}\right)$ is the execution delay for $\mathcal{T}$.

The device energy consumption for offloading the $N$ computation tasks is simply the transmit energy consumption, which is independent with the task offloading scheduling decision and can be written as
\begin{equation}
E_{\text{tr}}\left(\mathbf{p}\right) = \sum_{i=1}^{N} p_{i}\cdot \frac{d_{i}}{R\left(p_{i}\right)}= \sum_{j=1}^{N} p_{\sigma_{j}}\cdot \frac{d_{\sigma_{j}}}{R\left(p_{\sigma_{j}}\right)}.
\end{equation}
Consequently, the joint task offloading scheduling and transmit power allocation problem can be formulated as
\begin{align}
&\mathcal{P}_{1}: \min_{\bm{\sigma},\mathbf{p}}\ t^{N}_{\text{comp}}\left(\bm{\sigma},\mathbf{p}\right) + \eta \cdot E_{\text{tr}}\left(\mathbf{p}\right)\label{objP1}\\
&\ \ \ \ \ \ \mathrm{s.t.\ }\ \sigma_{j}\in\{1,\cdots,N\}, \sigma_{i}\neq \sigma_{j}, j\neq i, \forall  i,j \label{constraintSigma} \\
&\ \ \ \ \ \ \ \ \ \ \ \ 0\leq p_{i}\leq p_{\max},i=1,\cdots,N\label{peakpowerConst}.
\end{align}
The objective function in $\mathcal{P}_{1}$ is the weighted sum of the execution delay and device energy consumption with $\eta$ (in $\text{sec}\cdot \text{J}^{-1}$) as the weighting factor, which is able to adjust the tradeoff between the execution delay and device energy consumption performance. (\ref{constraintSigma}) guarantees the task offloading scheduling decision is valid, and (\ref{peakpowerConst}) stands for the peak and non-negative transmit power constraints for the mobile device.

It can be easily noticed that $\mathcal{P}_{1}$ is an MINLP problem as both the integer variables $\bm{\sigma}$ and the continuous variables $\mathbf{p}$ need to be optimized, which is very challenging to solve. In principle, the optimal solution for $\mathcal{P}_{1}$ can be obtained via exhaustive search, which, however, is practically infeasible due to high complexity. For instance, there will be $20!\approx 2.43\times 10^{18}$ different permutations for $N=20$, and for each permutation, the optimal transmit power allocation needs to be determined. As a result, instead of finding the optimal solution, we will propose a low-complexity sub-optimal algorithm for $\mathcal{P}_{1}$ based on alternating minimization in the next section.

\section{Sub-optimal Joint Task Offloading Scheduling and Transmit Power Allocation}
In this section, we will propose a low-complexity sub-optimal algorithm for $\mathcal{P}_{1}$ based on the idea of optimizing the task offloading scheduling and the transmit power allocation alternately.

\subsection{Optimal Task Offloading Scheduling}
We first develop an optimal task offloading scheduling algorithm for a given transmit power allocation vector by invoking techniques from \emph{flow shop scheduling theory} \cite{Emmons2013}. To begin with, we introduce a notation for a type of two-machine flow shop scheduling problems in the following definition.
\begin{Def}
$F2|(perm),(pmtn)|C_{\max}$ denotes the type of flow shop scheduling problems that satisfy the following properties:
\begin{itemize}
\item[a)] There are $N$ independent jobs and each of them should be processed sequentially by two machines $M_{1}$ and $M_{2}$.

\item[b)] All jobs are available for $M_{1}$ from the beginning, and they will be available for $M_{2}$ immediately after the completion of the processing at $M_{1}$.

\item[c)] Each machine can only process one job at each time.

\item[d)] Permutation scheduling is considered, i.e., the jobs go through the two machines in the same order, which is encoded as $(perm)$ in the notation.

\item[e)] Non-preemptive processing is enforced, i.e., interruption is not allowed once the processing of a job starts, which is encoded as $(pmtn)$ in the notation.

\item[f)] Each machine has a buffer with infinite capacity to store the arrived but not yet processed jobs.

\item[g)] The goal is to minimize the makespan of the $N$ jobs, which is encoded as $C_{\max}$ in the notation.
\end{itemize}
\label{Def2stageFS}
\end{Def}

We find that the task offloading scheduling problem is a two-machine flow shop scheduling problem, as demonstrated in the follow lemma.
\begin{lma}
For a given transmit power allocation vector $\mathbf{p}$, $\mathcal{P}_{1}$ is an $F2|(perm),(pmtn)|C_{\max}$ type of flow shop scheduling problem.
\end{lma}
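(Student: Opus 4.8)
The plan is to exhibit an explicit correspondence between the power-fixed version of $\mathcal{P}_{1}$ and a two-machine flow shop instance, and then to verify the seven defining properties of Definition \ref{Def2stageFS} one at a time. Concretely, I would identify machine $M_{1}$ with the wireless uplink channel and machine $M_{2}$ with the single-core CPU at the MEC server; job $i$ is the computation task $\text{T}_{i}$; the processing time of job $i$ on $M_{1}$ is the transmission time $a_{i}\triangleq d_{i}\slash R(p_{i})$, and the processing time of job $i$ on $M_{2}$ is the mobile-edge execution time $b_{i}\triangleq d_{i}c_{i}f_{\rm{ser}}^{-1}$. Since $\mathbf{p}$ is held fixed (with positive entries, so transmission rates are finite), both $a_{i}$ and $b_{i}$ are deterministic constants, and we obtain a well-defined flow shop instance with given processing times in which the scheduling variable $\bm{\sigma}$ plays the role of the common job sequence.

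Next I would check properties a)--f) by reading them directly off the system model. Each task must first have its input data transmitted to the server and only then can be executed, so every job visits $M_{1}$ before $M_{2}$ (property a); all $N$ tasks are ready to be offloaded at time $0$, and a task's data enters the (assumed infinitely large) server buffer the instant its transmission finishes (properties b and f); the single antenna transmits one task's data at a time and the single-core CPU runs one task at a time (property c); the FCFS discipline at the server forces the execution order to coincide with the offloading order, so the same permutation $\bm{\sigma}$ governs both machines (property d, permutation scheduling); and within the adopted model a transmission proceeds at the constant rate $R(p_{i})$ until all $d_{i}$ bits are delivered while an execution occupies the CPU for the full interval $b_{i}$, with neither operation interrupted (property e, the ``non-preemptive'' label the paper attaches to $(pmtn)$).

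It then remains to match the objective (property g). With $\mathbf{p}$ fixed, the term $\eta\cdot E_{\text{tr}}(\mathbf{p})$ in (\ref{objP1}) is a constant independent of $\bm{\sigma}$, so minimizing the objective of $\mathcal{P}_{1}$ over $\bm{\sigma}$ is equivalent to minimizing $t^{N}_{\text{comp}}(\bm{\sigma},\mathbf{p})$. I would then argue that $t^{N}_{\text{comp}}$ is exactly the makespan of this flow shop: by (\ref{readytime}), $t_{\text{ready}}^{j}=\sum_{k\leq j}a_{\sigma_{k}}$ is the completion time of the $j$th job on $M_{1}$, and substituting this into (\ref{completionTime}) reproduces verbatim the standard recursion $C_{2}^{j}=\max\{C_{1}^{j},C_{2}^{j-1}\}+b_{\sigma_{j}}$ (with $C_{2}^{0}=0$) for the completion time of the $j$th job on $M_{2}$. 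Since $t_{\text{comp}}^{j}$ is non-decreasing in $j$ --- each step adds the nonnegative quantity $b_{\sigma_{j}}$ after taking a maximum --- the last offloaded task $\text{T}_{\sigma_{N}}$ is also the last to leave $M_{2}$, hence $t^{N}_{\text{comp}}=C_{\max}$, which completes the identification.

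I expect the only delicate point, rather than a genuine obstacle, to be this last verification: confirming that recursion (\ref{completionTime}) is literally the flow-shop makespan recursion, and that it is $t^{N}_{\text{comp}}$ (and not some earlier $t^{j}_{\text{comp}}$) that equals $C_{\max}$; the monotonicity observation settles the latter. Everything else is a line-by-line matching of the system model against Definition \ref{Def2stageFS}.
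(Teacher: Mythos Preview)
Your proposal is correct and follows essentially the same approach as the paper: identify the transmitter with $M_{1}$ and the MEC server's CPU with $M_{2}$, take transmission and execution times as the two processing times, and then verify the properties of Definition~\ref{Def2stageFS}. Your treatment is in fact more explicit than the paper's---you spell out that $\eta\cdot E_{\text{tr}}(\mathbf{p})$ is constant once $\mathbf{p}$ is fixed, match recursion~(\ref{completionTime}) to the standard $C_{2}^{j}$ recursion, and use monotonicity to confirm $t_{\text{comp}}^{N}=C_{\max}$---whereas the paper simply asserts that the properties hold ``by analogy.''
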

\begin{proof}
In the task offloading scheduling problem, each computation task needs to go through two procedures before its completion, which are the transmission of the task input data and the mobile-edge execution at the MEC server. Hence, the transmitter of the mobile device and the MEC server can be regarded as $M_{1}$ and $M_{2}$ in Definition \ref{Def2stageFS}, respectively. Thus, the job processing time in the two machines correspond to the transmission time and mobile-edge execution time, respectively. By analogy, it can be checked that the properties in Definition \ref{Def2stageFS} also holds for the task offloading scheduling problem. As a result, $\mathcal{P}_{1}$ with a given $\mathbf{p}$ is an $F2|(perm),(pmtn)|C_{\max}$ type of flow shop scheduling problem.
\end{proof}

Since the task offloading scheduling problem is an $F2|(perm),(pmtn)|C_{\max}$ type of flow shop scheduling problem, the \emph{Johnson's Algorithm} can be applied to find the optimal solution \cite{Emmons2013}. In the Johnson's Algorithm, the task set $\mathcal{T}$ is partitioned into two disjoint subsets $\mathcal{F}$ and $\mathcal{G}$, where $\mathcal{F}\triangleq\{\text{T}_{i}\in\mathcal{T} | \frac{d_{i}}{R\left(p_{i}\right)}< \frac{d_{i}c_{i}}{f_{\rm{ser}}}\}$ and
$\mathcal{G}\triangleq\big\{\text{T}_{i}\in\mathcal{T} | \frac{d_{i}}{R\left(p_{i}\right)}\geq \frac{d_{i}c_{i}}{f_{\rm{ser}}}\}$. The tasks in set $\mathcal{F}$ will be scheduled before those in $\mathcal{G}$ according to the ascending order of the value of $d_{i}\slash R\left(p_{i}\right),\text{T}_i\in\mathcal{F}$, and the tasks in set $\mathcal{G}$ will be scheduled according to the descending order of $d_{i}c_{i}\slash f_{\rm{ser}},\text{T}_i\in\mathcal{G}$. Details of the Johnson's Algorithm are summarized in Algorithm \ref{JRalgorithm}, where the computation overhead mainly comes from the sorting procedures in Line 4. Hence, if the \emph{Quicksort Algorithm} is used as the sorting algorithm, the complexity of Algorithm \ref{JRalgorithm} is $\mathcal{O}\left(N\log N\right)$ \cite{Cormen09}, i.e., the task offloading scheduling problem can be solved optimally in polynomial time.

\begin{algorithm}[t]
\caption{The Johnson's Algorithm For The Optimal Task Offloading Scheduling Decision}
\label{alg1}
\begin{algorithmic}[1]
\STATE \textbf{Input:} $\mathcal{T}$, $\mathbf{d}$, $\mathbf{c}$, $\mathbf{p}$, and $f_{\rm{ser}}$.
\STATE \textbf{Output:} $\bm{\sigma}^{\text{opt}}$.
\STATE Obtain set $\mathcal{F}$ and $\mathcal{G}$.
\STATE Sort the computation tasks in set $\mathcal{F}$ and set $\mathcal{G}$ according to the ascending order of the transmission time and the descending order of the execution time at the MEC server, respectively, i.e., $d_{\left[1\right]}\slash R\left(p_{\left[1\right]}\right)\leq \cdots \leq d_{\left[|\mathcal{F}|\right]}\slash R\left(p_{\left[|\mathcal{F}|\right]}\right),\text{T}_{\left[i\right]}\in\mathcal{F},i=1,\cdots,|\mathcal{F}|$ and $d_{\langle 1\rangle}c_{\langle 1 \rangle}f_{\rm{ser}}^{-1}\geq \cdots \geq d_{\langle |\mathcal{G}|\rangle}c_{\langle |\mathcal{G}| \rangle}f_{\rm{ser}}^{-1},\text{T}_{\langle j \rangle} \in \mathcal{G},j=1,\cdots,|\mathcal{G}|$.
\STATE $\bm{\sigma}^{\text{opt}}=\left[\left[1\right],\cdots,\left[|\mathcal{F}|\right],\langle 1\rangle,\cdots,\langle |\mathcal{G}|\rangle\right]$.
\end{algorithmic}
\label{JRalgorithm}
\end{algorithm}

\subsection{Optimal Transmit Power Allocation}

In this subsection, we investigate the optimal transmit power allocation for the computation tasks with a given task offloading scheduling decision $\bm{\sigma}$. Note that both the execution delay and the device energy consumption depend on the transmit power allocation vector $\mathbf{p}$, and $t_{\text{comp}}^{N}\left(\bm{\sigma},\mathbf{p}\right)$ is given in a recursive form, while the objective function in $\mathcal{P}_{1}$ given $\bm{\sigma}$ is non-differentiable, which makes it difficult to solve. To overcome this issue, we introduce a set of auxiliary variables $\tilde{\mathbf{t}}_{\bm{\sigma}}\triangleq \left[\tilde{t}_{\sigma_1},\cdots,\tilde{t}_{\sigma_N}\right]$ and formulate a modified version of the transmit power allocation problem as
\begin{align}
&\mathcal{P}_{2}: \min_{\tilde{\mathbf{t}}_{\bm{\sigma}},\mathbf{p}}\ \tilde{t}_{\sigma_{N}} + \eta \cdot \sum_{i=1}^{N} p_{\sigma_{i}}\frac{d_{\sigma_{i}}}{R\left(p_{\sigma_{i}}\right)}\\
&\ \ \ \ \ \ \mathrm{s.t.\ }\ (\ref{peakpowerConst}) \nonumber \\
&\ \ \ \ \ \ \ \ \ \ \ \ \tilde{t}_{\sigma_{i}}\geq \sum_{j=1}^{i}\frac{d_{\sigma_{j}}}{R\left(p_{\sigma_{j}}\right)}+\frac{d_{\sigma_{i}}c_{\sigma_{i}}}{f_{\rm{ser}}},i=1,\cdots,N\label{RelaxConstTX1}\\
&\ \ \ \ \ \ \ \ \ \ \ \ \tilde{t}_{\sigma_{i}}\geq \tilde{t}_{\sigma_{i-1}} + \frac{d_{\sigma_{i}}c_{\sigma_{i}}}{f_{\rm{ser}}},i=2,\cdots,N,\label{RelaxConstEX}
\end{align}
which is a relaxed version of $\mathcal{P}_{1}$ given $\bm{\sigma}$ as $\tilde{t}_{\sigma_{i}}\geq \max\{\sum_{j=1}^{i}\frac{d_{\sigma_{j}}}{R\left(p_{\sigma_{j}}\right)},\tilde{t}_{\sigma_{i-1}}\} + \frac{d_{\sigma_{i}}c_{\sigma_{i}}}{f_{\rm{ser}}},i \geq 2$ and $\tilde{t}_{\sigma_{1}}\geq \frac{d_{\sigma_{1}}}{R\left(p_{\sigma_{1}}\right)}+\frac{d_{\sigma_{1}}c_{\sigma_{1}}}{f_{\rm{ser}}}$. In the following lemma, we show such a relaxation is tight.

\begin{lma}
If $\langle \mathbf{p}^{\text{opt}},\tilde{\mathbf{t}}^{\text{opt}}_{\bm{\sigma}}\rangle$ is an optimal solution for $\mathcal{P}_{2}$, then $\mathbf{p}^{\text{opt}}$ is also optimal for $\mathcal{P}_{1}$ with a given task offloading scheduling decision $\bm{\sigma}$, and $\tilde{t}^{\text{opt}}_{\sigma_{N}}$ is the corresponding execution delay.
\end{lma}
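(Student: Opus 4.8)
The plan is to show that $\mathcal{P}_2$ is an exact reformulation of $\mathcal{P}_1$ (with $\bm{\sigma}$ fixed) by establishing two facts: first, that at any optimal solution of $\mathcal{P}_2$ the auxiliary variables $\tilde{t}_{\sigma_i}$ must equal the recursively defined completion times $t^i_{\text{comp}}(\bm{\sigma},\mathbf{p})$; and second, that consequently the feasible sets and objective values of the two problems coincide on the relevant variables. The starting observation, already noted in the excerpt, is that constraints (\ref{RelaxConstTX1}) and (\ref{RelaxConstEX}) together are equivalent to $\tilde{t}_{\sigma_i}\ge \max\{t^i_{\text{ready}}(\bm{\sigma},\mathbf{p}),\tilde{t}_{\sigma_{i-1}}\}+d_{\sigma_i}c_{\sigma_i}f_{\rm ser}^{-1}$ for $i\ge 2$ and $\tilde{t}_{\sigma_1}\ge t^1_{\text{comp}}(\bm{\sigma},\mathbf{p})$, so the constraints describe exactly a lower bound on each $\tilde{t}_{\sigma_i}$ given by the recursion in (\ref{completionTime}).

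First I would argue by induction on $i$ that, for any feasible $\mathbf{p}$, the minimum value of $\tilde{t}_{\sigma_i}$ consistent with the constraints is precisely $t^i_{\text{comp}}(\bm{\sigma},\mathbf{p})$: the base case $i=1$ is immediate, and the inductive step follows because the right-hand side of the combined constraint is nondecreasing in $\tilde{t}_{\sigma_{i-1}}$, so pushing $\tilde{t}_{\sigma_{i-1}}$ down to its minimum $t^{i-1}_{\text{comp}}$ also minimizes the lower bound on $\tilde{t}_{\sigma_i}$. Next I would use the key monotonicity structure of the objective: $\tilde{\mathbf{t}}_{\bm{\sigma}}$ enters $\mathcal{P}_2$ only through $\tilde{t}_{\sigma_N}$, and only with a positive coefficient, while the energy term $\eta\sum_i p_{\sigma_i} d_{\sigma_i}/R(p_{\sigma_i})$ does not involve $\tilde{\mathbf{t}}_{\bm{\sigma}}$ at all. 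Therefore, at optimality, for the given $\mathbf{p}^{\text{opt}}$ we may assume $\tilde{t}^{\text{opt}}_{\sigma_N}$ is as small as the constraints allow, which by the induction argument forces $\tilde{t}^{\text{opt}}_{\sigma_N}=t^N_{\text{comp}}(\bm{\sigma},\mathbf{p}^{\text{opt}})$ (any feasible point with a larger $\tilde{t}_{\sigma_N}$ has strictly larger objective, contradicting optimality).

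With that identification in hand, the equivalence follows by a standard two-way comparison of optimal values. Given any $\mathbf{p}$ feasible for $\mathcal{P}_1$, setting $\tilde{t}_{\sigma_i}=t^i_{\text{comp}}(\bm{\sigma},\mathbf{p})$ yields a feasible point of $\mathcal{P}_2$ with the same objective value, so $\mathrm{val}(\mathcal{P}_2)\le \mathrm{val}(\mathcal{P}_1|\bm{\sigma})$; conversely, from an optimal $\langle\mathbf{p}^{\text{opt}},\tilde{\mathbf{t}}^{\text{opt}}_{\bm{\sigma}}\rangle$ of $\mathcal{P}_2$, the previous paragraph gives $\tilde{t}^{\text{opt}}_{\sigma_N}=t^N_{\text{comp}}(\bm{\sigma},\mathbf{p}^{\text{opt}})$, so $\mathbf{p}^{\text{opt}}$ is feasible for $\mathcal{P}_1$ with objective value $\mathrm{val}(\mathcal{P}_2)$, giving $\mathrm{val}(\mathcal{P}_1|\bm{\sigma})\le \mathrm{val}(\mathcal{P}_2)$. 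Hence the two values are equal, $\mathbf{p}^{\text{opt}}$ attains the optimum of $\mathcal{P}_1$ given $\bm{\sigma}$, and $\tilde{t}^{\text{opt}}_{\sigma_N}$ equals the corresponding execution delay $t^N_{\text{comp}}(\bm{\sigma},\mathbf{p}^{\text{opt}})$.

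The only subtle point — the part I would be most careful about — is the claim that an optimal solution of $\mathcal{P}_2$ can be taken to have $\tilde{t}_{\sigma_N}$ minimal. This is where the argument genuinely uses that $\tilde{\mathbf{t}}_{\bm{\sigma}}$ appears in the objective solely via $\tilde{t}_{\sigma_N}$ and monotonically; if, say, an intermediate $\tilde{t}_{\sigma_i}$ for $i<N$ were slack at the optimum, that causes no problem because one can lower it (and hence only relax the bound on $\tilde t_{\sigma_{i+1}}$) without increasing the objective, and repeating this top-down collapses the whole vector to the recursive values. I would state this reduction explicitly rather than leaving it implicit, since it is the crux of why the relaxation (\ref{RelaxConstTX1})--(\ref{RelaxConstEX}) is tight.
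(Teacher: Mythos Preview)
Your proof is correct and follows essentially the same route as the paper's: construct (or, in your case, argue by induction for) the minimal feasible $\tilde{\mathbf t}_{\bm\sigma}$ given $\mathbf p^{\text{opt}}$, observe that it equals the recursive completion times so that $\tilde t^{\text{opt}}_{\sigma_N}=t^{N}_{\text{comp}}(\bm\sigma,\mathbf p^{\text{opt}})$, and then close with the standard relaxation inequality $\text{Val}^{\text{opt}}_{\mathcal P_1}\ge \text{Val}^{\text{opt}}_{\mathcal P_2}$ plus achievability. The only difference is presentational: the paper states the minimal $\tilde{\mathbf t}'_{\bm\sigma}$ by direct construction and asserts its minimality, whereas you spell out the induction and the monotonicity-in-$\tilde t_{\sigma_N}$ argument explicitly.
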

\begin{proof}
With the optimal transmit power allocation vector $\mathbf{p}^{\text{opt}}$, we can construct a new feasible $\tilde{\mathbf{t}}'_{\bm{\sigma}}$ for $\mathcal{P}_{2}$ as
\begin{equation}
\tilde{t}_{\sigma_{i}}'=
\begin{cases}
\frac{d_{\sigma_{i}}}{R\left(p^{\text{opt}}_{\sigma_{i}}\right)}+\frac{d_{\sigma_{i}}c_{\sigma_{i}}}{f_{\rm{ser}}}, &i=1\\
\max\{\tilde{t}_{\sigma_{i-1}}',\sum_{j=1}^{i}\frac{d_{\sigma_{j}}}{R\left(p^{\text{opt}}_{\sigma_{j}}\right)}\} + \frac{d_{\sigma_{i}}c_{\sigma_{i}}}{f_{\rm{ser}}}, &i>1,
\end{cases}
\end{equation}
where $\tilde{t}'_{\sigma_{N}}$ is no larger than any of the feasible $\tilde{t}_{{\sigma}_{N}}$'s for $\mathcal{P}_{2}$ given $\mathbf{p}^{\text{opt}}$, i.e., $\tilde{t}^{\text{opt}}_{\sigma_{N}}=\tilde{t}'_{\sigma_{N}}$ and $\langle \mathbf{p}^{\text{opt}},\tilde{\mathbf{t}}_{\bm{\sigma}}'\rangle$ is also optimal for $\mathcal{P}_{2}$. Therefore, by applying $\mathbf{p}^{\text{opt}}$ to $\mathcal{P}_{1}$ given $\bm{\sigma}$, the value of the objective function $\text{Val}_{\mathcal{P}_{1}}$ equals $\text{Val}^{\text{opt}}_{\mathcal{P}_{2}}$, where $\text{Val}^{\text{opt}}_{\mathcal{P}_{2}}$ is the optimal value for $\mathcal{P}_{2}$. Since $\mathcal{P}_{2}$ is a relaxation of $\mathcal{P}_{1}$ given the task offloading scheduling decision, i.e., $\text{Val}_{\mathcal{P}_{1}}^{\text{opt}}\geq \text{Val}^{\text{opt}}_{\mathcal{P}_{2}}$ with $\text{Val}_{\mathcal{P}_{1}}^{\text{opt}}$ as the optimal value for $\mathcal{P}_{1}$ given $\bm{\sigma}$, we have $\text{Val}_{\mathcal{P}_{1}}=\text{Val}_{\mathcal{P}_{1}}^{\text{opt}}$, i.e., $\mathbf{p}^{\text{opt}}$ is optimal for $\mathcal{P}_{1}$ given $\bm{\sigma}$.
\end{proof}

Thus, we may concentrate on $\mathcal{P}_{2}$ in order to obtain the optimal transmit power allocation vector. Nevertheless, $\mathcal{P}_{2}$ is still difficult to solve due to its non-convexity, which is shown in the following lemma.
\begin{lma}
$\mathcal{P}_{2}$ is a non-convex optimization problem.
\label{nonconvexity}
\end{lma}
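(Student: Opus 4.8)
The plan is to show non-convexity by exhibiting a single offending element of the problem description, namely the objective term coming from the transmit energy. First I would recall that a minimization problem is convex only if the objective is a convex function over a convex feasible set; hence it suffices to identify one component that fails convexity while all the pieces that would need to be convex are not. The feasible region defined by (\ref{peakpowerConst}), (\ref{RelaxConstEX}) is clearly a polyhedron (the box constraint and the affine inequalities in $\tilde{t}_{\sigma_i}$), so the only place trouble can enter is through the data-rate function $R(p)$, which appears both in the energy term $\sum_i p_{\sigma_i} d_{\sigma_i}/R(p_{\sigma_i})$ of the objective and in the right-hand side of (\ref{RelaxConstTX1}).

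The key steps I would carry out, in order, are: (i) fix all variables except a single transmit power $p_{\sigma_i}$ and all the auxiliary variables, so the problem restricts to a one-dimensional slice; (ii) argue that the energy-per-task map $\phi(p)\triangleq p\cdot d/R(p)$, with $R(p)=\omega\log_2(1+\gamma p)$ for the appropriate constant $\gamma \triangleq g_0(L_0/L)^{\theta}/(N_0\omega)$, is \emph{not} convex on $(0,p_{\max}]$ — the standard way is to compute the second derivative and show it changes sign, or equivalently to note that $\phi(p)\to 0$ as $p\to 0^+$, $\phi$ is strictly positive for $p>0$, and $\phi$ is eventually convex and increasing, so it cannot be convex throughout (a convex function vanishing at an accumulation point of its domain from the right and positive afterwards would have to be monotone, contradicting the behaviour near $0$ where $\phi(p)\approx d/(\omega\log_2 e)\cdot 1$ — more carefully, $\lim_{p\to0^+}\phi(p)=d\ln2/\omega$ is a finite positive constant, while $\phi$ is increasing for large $p$, and one checks $\phi''$ is negative on an interval); (iii) conclude that even the objective alone, as a function of $\mathbf{p}$, is non-convex, hence $\mathcal{P}_2$ is a non-convex problem. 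A cleaner alternative for step (ii) is simply to plug in two concrete feasible power values and the midpoint and verify numerically that $\phi$ violates the midpoint-convexity inequality; since the lemma only asserts non-convexity, one explicit counterexample is logically sufficient and avoids the derivative computation entirely.

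It is worth also observing that constraint (\ref{RelaxConstTX1}) is itself nonconvex: it has the form $\tilde{t}_{\sigma_i}-\sum_{j\le i} d_{\sigma_j}/R(p_{\sigma_j}) - d_{\sigma_i}c_{\sigma_i}/f_{\rm ser}\ge 0$, and since $1/R(p)=1/(\omega\log_2(1+\gamma p))$ is a convex decreasing function of $p$, the left-hand side is $\tilde{t}_{\sigma_i}$ minus a sum of convex functions of the $p_{\sigma_j}$'s, i.e.\ a concave function, so the sublevel-style set $\{\,\cdot\ge 0\,\}$ need not be convex. I would mention this as a secondary reason but lead with the objective-based argument, since it is the most transparent and needs the least machinery.

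The main obstacle I expect is purely expository rather than mathematical: deciding how much of the second-derivative calculation to show. The honest difficulty is that $\phi(p)=pd/(\omega\log_2(1+\gamma p))$ has a second derivative whose sign is governed by a transcendental expression in $\ln(1+\gamma p)$, so a fully rigorous "it changes sign" argument requires either a limiting analysis at both ends of $(0,p_{\max}]$ or a committed numerical counterexample with fixed constants. I would opt for the numerical counterexample — choose, say, $\gamma$ and $\omega$ so the arithmetic is clean, evaluate $\phi$ at $p=0$, $p=p_{\max}/2$, $p=p_{\max}$ (or any three collinear-in-$p$ points), and display that $\phi\big((p_1+p_2)/2\big) > \tfrac12\phi(p_1)+\tfrac12\phi(p_2)$ — which sidesteps the transcendental estimate entirely and keeps the proof short, which matches the paper's style.
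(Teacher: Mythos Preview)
Your primary line of attack --- showing that the energy term $p\,d/R(p)$ in the objective fails to be convex in $p$ --- is exactly the route the paper takes. The paper's one-line proof simply says to verify the \emph{concavity} of $p/R(p)$ for $p>0$; since the objective is then an affine-plus-concave function of $(\tilde{\mathbf{t}}_{\bm{\sigma}},\mathbf{p})$, minimizing it is non-convex. So your plan is on target, but you are making it harder than necessary: $p/R(p)$ is in fact \emph{strictly concave on all of $(0,\infty)$}, not merely ``non-convex with a sign-changing second derivative.'' A short computation (set $L=\ln(1+\gamma p)$ and differentiate twice) gives that the sign of $\phi''$ equals the sign of $2\gamma p-(\gamma p+2)\ln(1+\gamma p)$, and a two-line monotonicity argument shows this quantity is negative for every $p>0$. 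That disposes of the ``transcendental estimate'' you were worried about and makes the numerical-counterexample fallback unnecessary.

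Your secondary observation about constraint~(\ref{RelaxConstTX1}) is incorrect and should be dropped. You correctly note that $1/R(p)$ is convex, so the left-hand side $\tilde{t}_{\sigma_i}-\sum_{j\le i} d_{\sigma_j}/R(p_{\sigma_j})-d_{\sigma_i}c_{\sigma_i}/f_{\rm ser}$ is concave. But the \emph{superlevel} set $\{\,\cdot\ge 0\,\}$ of a concave function is always convex --- equivalently, rewrite the constraint as (convex function)$\,\le \tilde{t}_{\sigma_i}$, which is a standard convex inequality. Thus (\ref{RelaxConstTX1}) is a convex constraint, and the non-convexity of $\mathcal{P}_2$ resides solely in the objective, exactly as the paper indicates.
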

\begin{proof}
The proof can be obtained by verifying the concavity of $\frac{p}{R\left(p\right)}$ ($p> 0$), which is omitted for brevity.
\end{proof}

Fortunately, we are able to transform $\mathcal{P}_{2}$ into a convex optimization problem with the \emph{change-of-variable} technique similar to the one used in \cite{Lorenzo13}. In particular, by introducing a set of new variables $\xi_{\sigma_i}= 1\slash R\left(p_{\sigma_i}\right),i=1,\cdots,N$, $\mathcal{P}_{2}$ can be rephrased as
\begin{align}
&\mathcal{P}_{3}: \min_{\tilde{\mathbf{t}}_{\bm{\sigma}},\bm{\xi}}\ \tilde{t}_{\sigma_{N}} + C \cdot \sum_{i=1}^{N} d_{\sigma_{i}} \xi_{\sigma_{i}} \left(2^{\frac{1}{\omega \xi_{\sigma_{i}}}}-1\right)\\
&\ \ \ \ \ \ \mathrm{s.t.\ }\ (\ref{RelaxConstEX})\nonumber\\
&\ \ \ \ \ \ \ \ \ \ \ \ \xi_{\sigma_{i}} \geq D ,i=1,\cdots,N \label{boundXi}\\
&\ \ \ \ \ \ \ \ \ \ \ \ \tilde{t}_{\sigma_{i}}\geq \sum_{j=1}^{i}d_{\sigma_{j}}\xi_{\sigma_{j}}+\frac{d_{\sigma_{i}}c_{\sigma_{i}}}{f_{\rm{ser}}},i=1,\cdots,N,\label{RelaxConstTX2}
\end{align}
where $C\triangleq\frac{\eta  N_{0}\omega}{g_{0}\left(L_{0}\slash L\right)^{\theta}}$ and $D\triangleq \frac{1}{\omega\log_{2}\left(1+\frac{g_{0}\left(L_{0}\slash L\right)^{\theta}}{N_{0}\omega}p_{\max}\right)}$. In the following lemma, we show that $\mathcal{P}_{3}$ is a convex problem.
\begin{lma}
$\mathcal{P}_{3}$ is a convex optimization problem.
\end{lma}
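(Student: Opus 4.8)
The plan is to verify convexity of $\mathcal{P}_{3}$ straight from the definition: show that its feasible region is a convex set and that its objective is a convex function on that set. I would organize the argument into two essentially independent parts — the constraints and the objective — and then combine them.

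First I would dispose of the constraints, which is routine bookkeeping. Constraints (\ref{RelaxConstEX}) and (\ref{RelaxConstTX2}) are \emph{affine} in the decision variables $(\tilde{\mathbf{t}}_{\bm{\sigma}},\bm{\xi})$, because $d_{\sigma_j}$, $c_{\sigma_j}$, and $f_{\rm{ser}}$ are fixed constants; and (\ref{boundXi}) is a simple affine (box-type) constraint which, by the strict monotonicity of $R(\cdot)$, is exactly equivalent to the original peak and non-negativity power constraints (\ref{peakpowerConst}) under the substitution $\xi_{\sigma_i}=1/R(p_{\sigma_i})$ (in particular $\xi_{\sigma_i}\geq D$ encodes $p_{\sigma_i}\leq p_{\max}$, and any finite $\xi_{\sigma_i}>0$ corresponds to $p_{\sigma_i}\geq 0$). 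Hence the feasible set is a polyhedron, which is convex.

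The crux is the objective $\tilde{t}_{\sigma_{N}} + C\sum_{i=1}^{N} d_{\sigma_{i}} \xi_{\sigma_{i}}\big(2^{1/(\omega \xi_{\sigma_{i}})}-1\big)$. The term $\tilde{t}_{\sigma_{N}}$ is linear, and since $C>0$ and $d_{\sigma_i}>0$, it suffices to show that $g(\xi)\triangleq \xi\big(2^{1/(\omega\xi)}-1\big)$ is convex for $\xi>0$; the objective is then a non-negative combination of convex functions plus a linear term. I would establish convexity of $g$ by recognizing it as a perspective function: setting $\phi(u)\triangleq 2^{u/\omega}-1=e^{(\ln 2)\,u/\omega}-1$, which is convex as the composition of the convex exponential with an affine map, we have $g(\xi)=\xi\,\phi(1/\xi)$, i.e.\ $g$ is the perspective of $\phi$ restricted along the ray $\{(1,\xi):\xi>0\}$, and the perspective operation preserves convexity. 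As an alternative one can simply compute $g''(\xi)=\big((\ln 2)/\omega\big)^{2}\,\xi^{-3}\,2^{1/(\omega\xi)}>0$ for $\xi>0$, a short calculation. Combining the two parts, $\mathcal{P}_{3}$ minimizes a convex function over a polyhedron, and is therefore a convex optimization problem.

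I expect the convexity of $g(\xi)=\xi\big(2^{1/(\omega\xi)}-1\big)$ to be the only real obstacle: Lemma \ref{nonconvexity} already shows the ``raw'' cost $p/R(p)$ is concave and thus unusable as a minimization objective, so it is precisely this step that certifies that the change of variable in $\mathcal{P}_{3}$ actually buys convexity. The perspective-function viewpoint is the cleanest way to see it without grinding through derivatives; everything else — affineness of the constraints and closure of convexity under non-negative sums — is standard.
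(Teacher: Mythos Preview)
Your proposal is correct and follows essentially the same approach as the paper: verify that the constraints are linear and that the objective is a non-negative combination of a linear term and the scalar function $\psi(\xi)=\xi(2^{1/(\omega\xi)}-1)$, whose convexity on $\xi>0$ is the only nontrivial step. The paper establishes that step by the second-derivative computation you list as your alternative; your perspective-function argument is a clean extra justification but does not change the route.
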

\begin{proof}
Denote $\psi\left(\xi\right)\triangleq\xi\left(2^{\frac{1}{\omega \xi}}-1\right)$. The second-order derivative of $\psi\left(\xi\right)$ is given by
$\frac{d^{2}\psi\left(\xi\right)}{d \xi^{2}}=2^{\frac{1}{\omega\xi}}\frac{\ln^{2}2}{\omega^{2}\xi^{3}}\geq 0,\forall \xi >0$, i.e., $\psi\left(\xi\right)$ is a convex function of $\xi$ ($\xi>0$). Since the objective function is a summation of convex and linear functions, and the constraints are linear, $\mathcal{P}_{3}$ is a convex optimization problem \cite{Boyd04}.
\end{proof}

Hence, the optimal transmit power allocation vector can be obtained numerically using convex optimization solvers such as CVX \cite{CVX13}. Interestingly, we find that the optimal transmit power allocation vector, i.e., $p^{\text{opt}}_{\sigma_{1}},\cdots,p^{\text{opt}}_{\sigma_{N}}$, is a non-increasing sequence, as demonstrated in the following corollary.
\begin{corol}
Suppose $1\leq i< j \leq N$, then $p_{\sigma_{i}}^{\text{opt}}\geq p_{\sigma_{j}}^{\text{opt}}$, where $\mathbf{p}^{\text{opt}}$ is the optimal power allocation vector for $\mathcal{P}_{2}$ (also for $\mathcal{P}_{1}$ given $\bm{\sigma}$).
\label{nonincreasingProperty}
\end{corol}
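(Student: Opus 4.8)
The plan is to argue by contradiction with a local perturbation of the optimal solution, carried out in the convex reformulation $\mathcal{P}_{3}$ rather than in $\mathcal{P}_{2}$: there the transmission times $d_{\sigma_{i}}\xi_{\sigma_{i}}$ are affine in the variables and the convexity of $\psi$ is directly at hand. I would start with two reductions. Since $p\mapsto 1/R(p)=\xi$ is a strictly decreasing bijection on $[0,p_{\max}]$, the claim ``$p^{\text{opt}}_{\sigma_{i}}\ge p^{\text{opt}}_{\sigma_{j}}$ for $i<j$'' is equivalent to ``$\xi^{\text{opt}}_{\sigma_{1}}\le\xi^{\text{opt}}_{\sigma_{2}}\le\cdots\le\xi^{\text{opt}}_{\sigma_{N}}$'' for the optimal $\bm{\xi}$ of $\mathcal{P}_{3}$; and by Lemma 2 together with this change of variables, the optimal $\langle\mathbf{p}^{\text{opt}},\tilde{\mathbf{t}}^{\text{opt}}_{\bm{\sigma}}\rangle$ corresponds to an optimal $\langle\bm{\xi}^{\text{opt}},\tilde{\mathbf{t}}^{\text{opt}}_{\bm{\sigma}}\rangle$ of $\mathcal{P}_{3}$ with the same objective value. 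Suppose the conclusion fails. A sequence that is not non-decreasing has a descent at an adjacent position, so there is an index $k$ with $\xi^{\text{opt}}_{\sigma_{k}}>\xi^{\text{opt}}_{\sigma_{k+1}}$.

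Next I would perturb only these two coordinates, keeping $\tilde{\mathbf{t}}_{\bm{\sigma}}$ frozen at $\tilde{\mathbf{t}}^{\text{opt}}_{\bm{\sigma}}$: for small $\epsilon>0$, set $\xi_{\sigma_{k}}(\epsilon)=\xi^{\text{opt}}_{\sigma_{k}}-\epsilon/d_{\sigma_{k}}$ and $\xi_{\sigma_{k+1}}(\epsilon)=\xi^{\text{opt}}_{\sigma_{k+1}}+\epsilon/d_{\sigma_{k+1}}$, with all other coordinates unchanged. The point of scaling by $1/d_{\sigma_{k}}$ and $1/d_{\sigma_{k+1}}$ is that $d_{\sigma_{k}}\xi_{\sigma_{k}}+d_{\sigma_{k+1}}\xi_{\sigma_{k+1}}$ is preserved, so the partial transmission-time sum $\sum_{j=1}^{i}d_{\sigma_{j}}\xi_{\sigma_{j}}$ — i.e. the right-hand side of (\ref{RelaxConstTX2}) — is unchanged for every $i\neq k$ and drops by exactly $\epsilon$ for $i=k$. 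This keeps $\langle\bm{\xi}(\epsilon),\tilde{\mathbf{t}}^{\text{opt}}_{\bm{\sigma}}\rangle$ feasible for $\mathcal{P}_{3}$: constraint (\ref{RelaxConstEX}) does not involve $\bm{\xi}$ and still holds; (\ref{RelaxConstTX2}) still holds because its right-hand side only decreased; and for (\ref{boundXi}) note that $\xi^{\text{opt}}_{\sigma_{k}}>\xi^{\text{opt}}_{\sigma_{k+1}}\ge D$ forces $\xi^{\text{opt}}_{\sigma_{k}}>D$ strictly, so $\xi_{\sigma_{k}}(\epsilon)\ge D$ for $\epsilon$ small, while $\xi_{\sigma_{k+1}}(\epsilon)$ only increases.

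It remains to track the objective. The delay term $\tilde{t}^{\text{opt}}_{\sigma_{N}}$ is literally unchanged, since $\tilde{\mathbf{t}}_{\bm{\sigma}}$ was frozen. In the energy term $C\sum_{i}d_{\sigma_{i}}\psi(\xi_{\sigma_{i}})$, only the $\sigma_{k}$ and $\sigma_{k+1}$ summands move, and the derivative at $\epsilon=0$ of $d_{\sigma_{k}}\psi(\xi_{\sigma_{k}}(\epsilon))+d_{\sigma_{k+1}}\psi(\xi_{\sigma_{k+1}}(\epsilon))$ equals $\psi'(\xi^{\text{opt}}_{\sigma_{k+1}})-\psi'(\xi^{\text{opt}}_{\sigma_{k}})$. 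This is strictly negative: the expression $\psi''(\xi)=2^{1/(\omega\xi)}\ln^{2}2/(\omega^{2}\xi^{3})$ is strictly positive for $\xi>0$, so $\psi'$ is strictly increasing, and $\xi^{\text{opt}}_{\sigma_{k}}>\xi^{\text{opt}}_{\sigma_{k+1}}$. Hence for all sufficiently small $\epsilon>0$ the energy term strictly decreases while the delay term is unchanged, so $\langle\bm{\xi}(\epsilon),\tilde{\mathbf{t}}^{\text{opt}}_{\bm{\sigma}}\rangle$ has strictly smaller objective than $\langle\bm{\xi}^{\text{opt}},\tilde{\mathbf{t}}^{\text{opt}}_{\bm{\sigma}}\rangle$, contradicting optimality. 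Therefore $\xi^{\text{opt}}_{\sigma_{i}}\le\xi^{\text{opt}}_{\sigma_{j}}$, i.e. $p^{\text{opt}}_{\sigma_{i}}\ge p^{\text{opt}}_{\sigma_{j}}$, whenever $i<j$.

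The only delicate parts that I would write out in full are: (i) the feasibility bookkeeping, in particular the use of $\xi^{\text{opt}}_{\sigma_{k}}>D$ to know the perturbation does not immediately violate (\ref{boundXi}) — equivalently, that the earlier task is not already transmitting at $p_{\max}$, which is the one place the argument could break; and (ii) the \emph{strictness} of the convexity of $\psi$, without which the perturbation would only be non-worsening, yielding the existence of some non-increasing optimal allocation rather than the stated fact for every optimal allocation. One could instead phrase this as an adjacent pairwise interchange argument, but the smooth, ``mass-preserving'' perturbation above is cleaner precisely because $d_{\sigma_{k}}\neq d_{\sigma_{k+1}}$ in general, so a literal swap of the two transmit powers would not preserve the transmission-time partial sums that keep the delay term inert.
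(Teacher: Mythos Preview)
Your proof is correct and takes a genuinely different route from the paper. The paper argues via Lagrangian duality for $\mathcal{P}_{3}$: it writes the partial Lagrangian with multipliers $\bm{\alpha}_{\bm{\sigma}}\succeq\mathbf{0}$ on (\ref{RelaxConstTX2}) and $\bm{\beta}_{\bm{\sigma}}\succeq\mathbf{0}$ on (\ref{RelaxConstEX}), passes to the dual, and reads off from the stationarity condition that $\xi_{\sigma_{i}}^{\text{opt}}=\max\{D,\varphi(\sum_{j=i}^{N}\alpha_{\sigma_{j}}^{\text{opt}})\}$ for a function $\varphi$ that inverts $-\psi'$; since $\psi'$ is strictly increasing, $\varphi$ is decreasing, and since the partial tail sums $\sum_{j\ge i}\alpha_{\sigma_{j}}^{\text{opt}}$ are non-increasing in $i$, monotonicity of $\xi^{\text{opt}}$ follows. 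Your argument stays entirely in the primal and uses only the strict convexity of $\psi$ via an adjacent ``mass-preserving'' perturbation that leaves all the transmission-time partial sums (hence the delay term and all constraints) intact except for a slackening at position $k$. This is more elementary and self-contained, and your feasibility bookkeeping---in particular the observation that $\xi_{\sigma_{k}}^{\text{opt}}>\xi_{\sigma_{k+1}}^{\text{opt}}\ge D$ gives strict room below $\xi_{\sigma_{k}}^{\text{opt}}$---is exactly the point a referee would probe. What the paper's approach buys in exchange is the semi-closed form $\xi_{\sigma_{i}}^{\text{opt}}=\max\{D,\varphi(\cdot)\}$ in terms of the optimal multipliers, which is extra structural information beyond the monotonicity statement itself. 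Both arguments tacitly use $\eta>0$ (equivalently $C>0$): with $C=0$ the energy term disappears and neither proof rules out non-monotone optimizers, but that edge case is outside the paper's intended regime.
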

\begin{proof} Let $\langle \bm{\xi}^{\text{opt}}, \tilde{\mathbf{t}}^{\text{opt}}_{\bm{\sigma}}\rangle$ be the optimal solution for $\mathcal{P}_{3}$. Since $\xi_{\sigma_{i}}=1\slash R\left(p_{{\sigma_{i}}}\right)$, it is equivalent to show $\xi_{\sigma_{i}}^{\text{opt}}\leq \xi_{\sigma_{j}}^{\text{opt}}, 1\leq i<j\leq N$. We first write the partial Lagrangian for $\mathcal{P}_{3}$ as
\begin{equation}
\begin{split}
&\mathcal{L}\left(\bm{\xi},\tilde{\mathbf{t}}_{\bm{\sigma}},\bm{\lambda}_{\bm{\sigma}}\right)= \tilde{t}_{\sigma_{N}} + C \cdot \sum_{i=1}^{N} d_{\sigma_{i}} \xi_{\sigma_{i}} \left(2^{\frac{1}{\omega \xi_{\sigma_{i}}}}-1\right)\\
&+ \sum_{i=1}^{N}\alpha_{\sigma_i}\left(\sum_{j=1}^{i}d_{\sigma_{j}}\xi_{\sigma_{j}}+\frac{d_{\sigma_{i}}c_{\sigma_{i}}}{f_{\rm{ser}}}-\tilde{t}_{\sigma_{i}}\right)\\
&+\sum_{k=2}^{N}\beta_{\sigma_k}\left(\tilde{t}_{\sigma_{k-1}}-\tilde{t}_{\sigma_{k}}+\frac{d_{\sigma_{k}}c_{\sigma_{k}}}{f_{\rm{ser}}}\right),
\end{split}
\end{equation}
where $\bm{\alpha}_{\bm{\sigma}}=\left[\alpha_{\sigma_{1}},\cdots,\alpha_{\sigma_{N}}\right]\succeq \mathbf{0}$ and $\bm{\beta}_{\bm{\sigma}}=\left[\beta_{\sigma_{2}},\cdots,\beta_{\sigma_{N}}\right]\succeq \mathbf{0}$ are the Lagrangian multipliers for constraints (\ref{RelaxConstTX2}) and (\ref{RelaxConstEX}), respectively. Define  $\bm{\lambda}_{\bm{\sigma}}\triangleq \left[\bm{\alpha}_{\bm{\sigma}},\bm{\beta}_{\bm{\sigma}}\right]$. Since $\mathcal{P}_{3}$ is a convex problem, we consider its dual problem
$\max_{\bm{\lambda}_{\bm{\sigma}}\succeq \mathbf{0}} g\left(\bm{\lambda}_{\bm{\sigma}}\right)$,
where $g\left(\bm{\lambda}_{\bm{\sigma}}\right)=\inf_{\tilde{t}_{\sigma_{i}},\xi_{\sigma_i}\geq D,\forall i}\mathcal{L}\left(\bm{\xi},\tilde{\mathbf{t}}_{\bm{\sigma}},\bm{\lambda}_{\bm{\sigma}}\right)$. If the optimal dual variables are given by $\bm{\lambda}_{\bm{\sigma}}^{\text{opt}}$, the optimal primal variables $\bm{\xi}^{\text{opt}}$ can be obtained by solving
\begin{equation}
\min_{\xi_{\sigma_{i}}\geq D,\forall i} C\cdot \sum_{i=1}^{N} d_{\sigma_{i}} \xi_{\sigma_{i}} \left(2^{\frac{1}{\omega \xi_{\sigma_{i}}}}-1\right) + \sum_{i=1}^{N}\alpha^{\text{opt}}_{\sigma_i}\left(\sum_{j=1}^{i}d_{\sigma_{j}}\xi_{\sigma_{j}}\right),
\end{equation}
for which the optimal solution is given by $\xi^{\star}_{\sigma_i}\left(\bm{\lambda}_{\bm{\sigma}}^{\text{opt}}\right)= \max\{D,\varphi\left(\sum_{j=i}^{N}\alpha^{\text{opt}}_{\sigma_{j}}\right)\}$. Here, when $x=0$, $\varphi\left(x\right)\triangleq +\infty$, and when $x>0$, $\varphi\left(x\right)$ denotes the root of equation $-d \psi\left(\xi_{\sigma_{i}}\right)\slash d \xi_{\sigma_{i}} = 1-2^{\frac{1}{\omega \xi_{\sigma_{i}}}}\left(1-\frac{\ln 2}{\omega \xi_{\sigma_{i}}}\right) = x\slash C$. Since $d \psi\left(\xi_{\sigma_{i}}\right)\slash d \xi_{\sigma_{i}}$ is an increasing function of $\xi_{\sigma_{i}}$ ($\xi_{\sigma_{i}}>0$), $\lim_{\xi_{\sigma_{i}}\rightarrow 0}d \psi\left(\xi_{\sigma_{i}}\right)\slash d \xi_{\sigma_{i}}=-\infty$ and $\lim_{\xi_{\sigma_{i}}\rightarrow +\infty}d \psi\left(\xi_{\sigma_{i}}\right)\slash d \xi_{\sigma_{i}}= 0$, we can show that for $i<j$, $\xi_{\sigma_{i}}^{\text{opt}}\leq \xi_{\sigma_{j}}^{\text{opt}}$ as $\alpha^{\text{opt}}_{\sigma_{l}}\geq 0,\forall l=1,\cdots,N$, which completes the proof.
\end{proof}

An intuitive explanation for Corollary \ref{nonincreasingProperty} is that the completion time of $\text{T}_{\sigma_{i}}$ will affect those of all the subsequent tasks, i.e., $\text{T}_{\sigma_{i+1}},\cdots,\text{T}_{\sigma_{N}}$. Thus, it is desirable to allocate a higher transmit power in order to reduce its transmission time.

\subsection{The Alternating Minimization Algorithm}

In the proposed algorithm, the task offloading scheduling decision and the transmit power allocation vector will be updated in an alternating manner, for which, the key steps are summarized in Algorithm \ref{AtlMinalgorithm}. Since the Johnson's Algorithm determines the optimal task offloading scheduling with a given transmit power allocation vector, and the solution for $\mathcal{P}_{3}$ offers the optimal transmit power allocation with a given task offloading scheduling decision, the value of the objective function in $\mathcal{P}_{1}$ decreases after each update, i.e., the convergence of Algorithm \ref{AtlMinalgorithm} is guaranteed.

\begin{algorithm}[h]
\caption{Sub-Optimal Joint Task Offloading Scheduling and Transmit Power Allocation Algorithm}
\label{alg2}
\begin{algorithmic}[1]
\STATE \textbf{Input:} $\mathbf{p}^{\text{prop}}=p_{\max}\mathbf{1}$, $\bm{\sigma}^{{\text{prop}}}=\left[1,\cdots,N\right]$, $\text{iter}_{\max}=50$, $\text{Val}_{\text{new}}=t^{N}_{\text{comp}}\left(\bm{\sigma}^{{\text{prop}}},\mathbf{p}^{{\text{prop}}}\right)+ \eta \cdot E_{\text{tr}}\left(\mathbf{p}^{{\text{prop}}}\right)$, $\text{Val}_{\text{old}}=\text{Val}_{\text{new}} + 10$, $\epsilon=10^{-7}$, and $I=0$.
\STATE \textbf{Output:} $\mathbf{p}^{\text{prop}}$ and $\bm{\sigma}^{\text{prop}}$.
\STATE \textbf{While} $\text{Val}_{\text{old}}-\text{Val}_{\text{new}} \geq \epsilon$ and $I\leq \text{iter}_{\max}$ \textbf{do}
\STATE \hspace{6pt} Set $I=I+1$.
\STATE \hspace{6pt} Set $\text{Val}_{\text{old}}=\text{Val}_{\text{new}}$.
\STATE \hspace{6pt} Update $\bm{\sigma}^{\text{prop}}$ using Algorithm \ref{JRalgorithm} with the transmit power\\
        \ \ \ allocation vector $\mathbf{p}^{{\text{prop}}}$.
\STATE \hspace{6pt} Update $\mathbf{p}^{\text{prop}}$ with the task offloading scheduling \\
\ \ \  decision $\bm{\sigma}^{\text{prop}}$ by solving $\mathcal{P}_{3}$.
\STATE \hspace{6pt} Set $\text{Val}_{\text{new}}=t^{N}_{\text{comp}}\left(\bm{\sigma}^{\text{prop}},\mathbf{p}^{\text{prop}}\right)+ \eta \cdot E_{\text{tr}}\left(\mathbf{p}^{\text{prop}}\right)$.
\STATE \textbf{End while}
\end{algorithmic}
\label{AtlMinalgorithm}
\end{algorithm}

\section{Simulation Results}
In this section, we first investigate the impact of the optimal task offloading scheduling on the delay performance. The proposed joint task offloading scheduling and transmit power allocation algorithm will then be evaluated. In simulations, the task input data size and the task workload are assumed to be uniformly distributed, i.e., $d_{i}\sim \text{Unif}\left(\left[0,2 d_{\text{avg}}\right]\right)$ and $c_{i}\sim \text{Unif}\left(\left[0,2 c_{\text{avg}}\right]\right)$ where $d_{\text{avg}}= 1\ \text{kbits}$ and $c_{\text{avg}} = 797.5\ \text{cycles}\slash\text{bit}$ \cite{Miettinen1006}. The CPU speed at the MEC server is set to be $f_{\text{ser}}= 1\ {\text{GHz}}$ unless otherwise specified.

\subsection{Impact of The Optimal Task Offloading Scheduling}
We investigate the impact of the optimal task offloading scheduling on the execution delay by setting $\eta = 0\ {\text{sec}\cdot \text{J}^{-1}}$, which is obtained by Algorithm \ref{alg1}. A random scheduling algorithm is adopted as the performance benchmark, where the task offloading scheduling decision $\bm{\sigma}$ is a random permutation of the task indices. Three scenarios with $R = \frac{f_{\text{ser}}}{c_{\text{avg}}}\approx 1.25\ \text{Mbps}$, $R=\frac{2f_{\text{ser}}}{c_{\text{avg}}}\approx 2.51\ \text{Mbps}$, and $R=\frac{f_{\text{ser}}}{2c_{\text{avg}}}\approx 0.63\ \text{Mbps}$, are considered, where the average transmission time of the task input data equals the average processing time at the MEC server in the first case.
\vspace{-10pt}
\begin{figure}[h]
\centering
\includegraphics[width=0.425\textwidth]{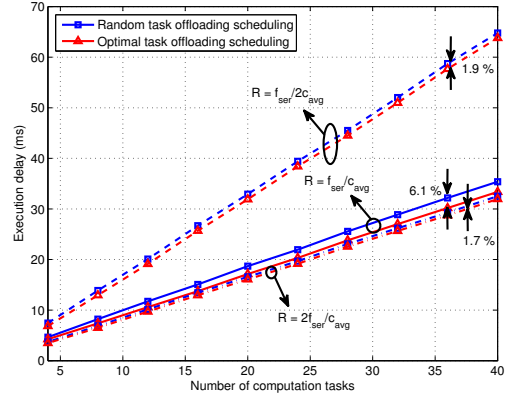}
\vspace{-10pt}
\caption{Execution delay vs. the number of computation tasks.}
\label{CTvsNumTasks}
\end{figure}

In Fig. \ref{CTvsNumTasks}, we show the relationship between the execution delay and the number of computation tasks. It can be seen that the execution delay increases linearly with the number of tasks, and a larger transmission rate results in a lower execution delay, which agree with intuitions. However, increasing the transmission rate from $0.63$ to $1.25\ \text{Mbps}$ brings a significant delay reduction, while the delay reduction by further doubling the transmission rate shrinks. This is because in the latter case, the system bottleneck turns into the limited computational resource at the MEC server. Besides, the proposed optimal task offloading scheduling algorithm outperforms the random task offloading scheduling for all cases, while the performance gain achieved by the optimal algorithm is most obvious when $R = \frac{f_{\text{ser}}}{c_{\text{avg}}}$ (e.g. 6.1\% for $N=35$). In other words, the optimal task offloading scheduling is more critical when the available radio resource and computational resource are relatively balanced, i.e., neither of them dominates the other.

\subsection{Joint Task Offloading Scheduling and Transmit Power Allocation}

We evaluate the proposed joint task offloading scheduling and transmit power allocation algorithm by setting $g_{0}=-40\ \text{dB}$, $L_{0}=1\ \text{m}$, $L=100\ \text{m}$, $\theta=4$, $\omega = 1\ \text{MHz}$, $N_{0}=-174\ {\text{dBm}}\slash {\text{Hz}}$, $p_{\max}=100\ {\text{mW}}$, and $N = 20$.

\begin{figure}[h]
\centering
\includegraphics[width=0.425\textwidth]{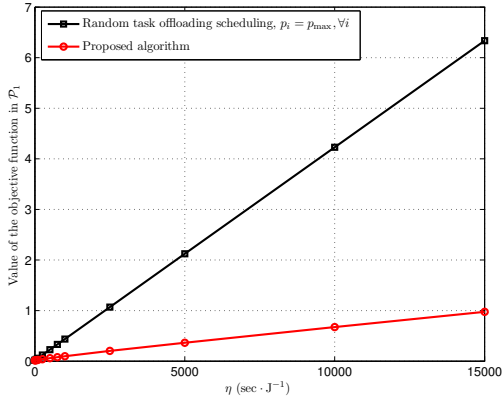}
\vspace{-10pt}
\caption{Value of the objective function in $\mathcal{P}_{1}$ vs. $\eta$.}
\label{ValObj}
\end{figure}

\begin{figure}[h]
\centering
\includegraphics[width=0.425\textwidth]{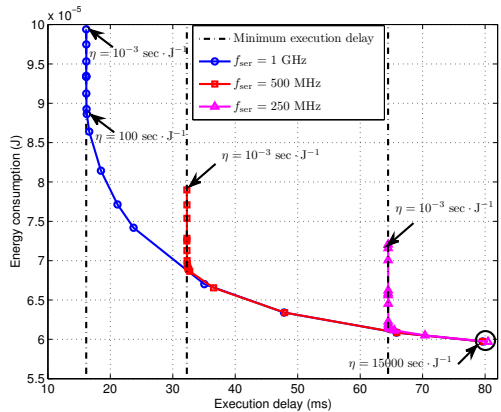}
\vspace{-10pt}
\caption{Device energy consumption vs. execution delay. (The device energy consumption with $p_{i}=p_{\max}$ is $4.21\times 10^{-4}\ \text{J}$.)}
\label{EnergyVSCompleTime}
\end{figure}

By varying $\eta$, the values of the objective function in $\mathcal{P}_{1}$ under the proposed algorithm and a benchmark algorithm with random task offloading scheduling and maximum transmit power are shown in Fig. \ref{ValObj}. From the curves, we see a significant performance improvement achieved by the proposed algorithm. Besides, the performance gain becomes more dramatic as $\eta$ increases since the transmit power allocation vector used in the benchmark scheme is non-adaptive to $\eta$.

The relationship between the device energy consumption and the execution delay is shown in Fig. \ref{EnergyVSCompleTime}, from which, we see the device energy consumption decreases while the execution delay increases with $\eta$. It is also shown that allocating the transmission energy beyond a threshold has no effect on improving the delay performance. This indicates that a large device energy consumption reduction can be achieved without loss of the optimal delay performance under the proposed algorithm with a suitable choice of $\eta$. For instance, with $\eta = 100\  \text{sec}\cdot \text{J}^{-1}$, $78\%$ of the device energy consumption is saved for $f_{\text{ser}}=1\ \text{GHz}$. This is due to the fact that the delay performance is limited by the computational resource, similar to what was observed in Fig. \ref{CTvsNumTasks}. In addition, when the device energy consumption becomes sufficiently small ($\eta$ becomes large enough), the curves for different CPU speeds at the MEC sever converge, i.e., the delay performance is constrained by the radio resource. This reveals a fundamental design principle for MEC systems: Once the system is constrained by the available radio resource, there is no need to deploy too much computational resources.

\section{Conclusions}
In this paper, we investigated joint task offloading scheduling and transmit power allocation for MEC systems with multiple independent tasks. Based on flow shop scheduling and convex optimization, we proposed a low-complexity sub-optimal algorithm to minimize the weighted sum of the execution delay and device energy consumption. It was found that the optimal task offloading scheduling achieves the most noticeable delay performance improvement when the available radio and computational resources are relatively balanced. Also, near-optimal delay performance together with a large device energy consumption reduction can be achieved by the proposed algorithm. For future investigation, it would be interesting to extend this work for mobile devices with certain computation capability, where the task offloading decision, i.e., whether to offload a task or not, the task offloading scheduling decision, as well as the transmit power allocation need to be jointly optimized.

\end{document}